\documentclass[journal]{IEEEtran}
\bibliographystyle{IEEEtran}
\usepackage{graphicx,color}
\usepackage{cite}
\usepackage{setspace} 
\usepackage{amsmath}
\usepackage{amsmath,amsthm,amssymb}
\usepackage{multirow}
\usepackage{hhline}
\usepackage{epsfig}
\usepackage{subfigure}
\usepackage{epstopdf}
\usepackage{verbatim}
\usepackage{algorithm}
\usepackage{algorithmicx}
\usepackage{algpseudocode}
\usepackage{etoolbox}
\usepackage{cases}
\usepackage{csquotes}
\usepackage{mathtools,stmaryrd}
\SetSymbolFont{stmry}{bold}{U}{stmry}{m}{n}
\usepackage{bbm}
\usepackage{lettrine}

\newcommand{\suchthat}{\;\ifnum\currentgrouptype=16 \middle\fi|\;}

\makeatletter
\newcommand*{\indep}{%
  \mathbin{%
    \mathpalette{\@indep}{}%
  }%
}
\newcommand*{\nindep}{%
  \mathbin{
    \mathpalette{\@indep}{\not}
  }%
}
\newcommand*{\@indep}[2]{%
  \sbox0{$#1\perp\m@th$}
  \sbox2{$#1=$}
  \sbox4{$#1\vcenter{}$}
  \rlap{\copy0}
  \dimen@=\dimexpr\ht2-\ht4-.2pt\relax
  \kern\dimen@
  {#2}%
  \kern\dimen@
  \copy0 
} 
\makeatother

\makeatletter
\newcommand*{\algrule}[1][\algorithmicindent]{%
  \makebox[#1][l]{%
    \hspace*{.2em}
    \vrule height .75\baselineskip depth .25\baselineskip
  }
}

\newcount\ALG@printindent@tempcnta
\def\ALG@printindent{%
    \ifnum \theALG@nested>0
    \ifx\ALG@text\ALG@x@notext
    \else
    \unskip
    \ALG@printindent@tempcnta=1
    \loop
    \algrule[\csname ALG@ind@\the\ALG@printindent@tempcnta\endcsname]%
    \advance \ALG@printindent@tempcnta 1
    \ifnum \ALG@printindent@tempcnta<\numexpr\theALG@nested+1\relax
    \repeat
    \fi
    \fi
}
\patchcmd{\ALG@doentity}{\noindent\hskip\ALG@tlm}{\ALG@printindent}{}{\errmessage{failed to patch}}
\patchcmd{\ALG@doentity}{\item[]\nointerlineskip}{}{}{} 
\makeatother

\newtheorem{theorem}{Theorem}

\newtheorem{corr}{Corollary}

\allowdisplaybreaks

\begin{document}

\title{Differential Chaos Shift Keying-based Wireless Power Transfer over a Frequency Selective Channel}

\author{\IEEEauthorblockN{Priyadarshi Mukherjee, Constantinos Psomas, and~Ioannis Krikidis}

\IEEEauthorblockA{IRIDA Research Centre for Communication Technologies\\
Department of Electrical and Computer Engineering, University of Cyprus\\
Emails: \{mukherjee.priyadarshi, psomas, krikidis\}@ucy.ac.cy}}

\maketitle
\begin{abstract}
This paper studies the performance of a differential chaos shift keying (DCSK)-based  wireless power transfer (WPT) setup in a frequency selective scenario. Particularly, by taking into account the nonlinearities of the energy harvesting (EH) process and a generalized frequency selective Nakagami-$m$ fading channel, we derive closed-form analytical expressions for the harvested energy in terms of the transmitted waveform and channel parameters. A simplified closed-form expression for the harvested energy is also obtained for a scenario, where the delay spread is negligible in comparison to the transmit symbol duration. Nontrivial design insights are provided, where it is shown how the power delay profile of the channel as well as the parameters of the transmitted waveform affect the EH performance. Our results show that a frequency selective channel is comparatively more beneficial for WPT compared to a flat fading scenario. However, a significant delay spread negatively impacts the energy transfer.
\end{abstract}

\begin{IEEEkeywords}
Differential chaos shift keying, wireless power transfer, frequency selective channel, nonlinear energy harvesting.
\end{IEEEkeywords}

\IEEEpeerreviewmaketitle

\section{Introduction}

With the rapid evolution of the Internet of Things (IoT) in the recent years, the wireless traffic is expected to significantly increase between $2019$ and $2025$ \cite{ericsson}. For applications, where a large number of devices are deployed, the network lifetime is significantly affected due to their limited battery life. As a result, charging these devices becomes critical and hence, low-powered wireless communication networks is a relevant and important topic of research. In this context, based on the advances made in recent years, wireless power transfer (WPT) is emerging as a promising solution. WPT is especially useful for applications, where the devices are wirelessly powered by harvesting energy from ambient/dedicated radio-frequency (RF) signals \cite{wdesg}. This is achieved by employing a rectifying antenna (rectenna) at the receiver that converts the received RF signals to direct current (DC).

Accurate mathematical modeling of the energy harvesting (EH) circuit is extremely crucial for designing efficient WPT architectures. In this context, some works propose simplified models of the EH circuit, namely, linear \cite{lin}, piece-wise linear \cite{plinear}, and the nonlinear logistic saturation-based \cite{satm}. However, unlike these works, the authors in \cite{wdesg} propose a realistic circuit-based model of the harvester circuit that also enables the design of waveforms that maximize the WPT efficiency. Based on this model, the aspect of designing waveforms that result in an enhanced harvested energy gained importance. The authors in \cite{papr} show that the nonlinearity of the rectification process at the EH circuit causes certain waveforms, with high peak-to-average-power-ratio (PAPR) to provide higher output DC power, compared to conventional constant-envelop sinusoidal signals. Based on this observation, there are some works, which investigate the effect of the transmitted symbols and modulation techniques on WPT. By considering the nonlinear EH model proposed in \cite{wdesg}, the authors investigate the use of multisine waveforms for WPT due to their high PAPR. The work in \cite{hparam} proposes a simultaneous wireless information and power transfer (SWIPT) architecture based on the superposition of multi-carrier unmodulated and modulated waveforms at the transmitter. Apart from the multisine waveforms, experimental studies demonstrate that due to their high PAPR, chaotic waveforms outperform conventional single-tone signals in terms of the WPT efficiency \cite{chaosexp2}.

Due to its properties such as sensitivity to initial data and aperiodicity, chaotic waveforms have been extensively used in the past to improve the performance of wireless communication systems. In this context, the non-coherent modulation technique of differential chaos shift keying (DCSK) is one of the most widely studied chaotic signal-based communication system \cite{ch2}. The majority of the related works focus on the error performance of such systems for various scenarios. To exploit the benefits of both DCSK and WPT, there are few works in the literature that investigate DCSK-based SWIPT, e.g. \cite{chaoswipt1,chaoswipt4,chaoswipt3,jstsp}. The work in \cite{chaoswipt1} proposes a short-reference DCSK-based SWIPT architecture to achieve higher data rate than the conventional system. The authors in \cite{chaoswipt4} investigate adaptive link selection for buffer-aided relaying in a DCSK-SWIPT architecture, where two link-selection schemes based on harvested energy are proposed. The work in \cite{chaoswipt3} investigates a chaotic multi-carrier system in a SWIPT framework via the sub-carrier index to reduce the energy consumption. However, the aforementioned studies consider a simplified linear model for the EH, and as a result, they are independent of the circuit characteristics as well as the design of excitation waveforms  \cite{wdesg}. In this context, the authors in \cite{jstsp} propose a novel DCSK-based WPT architecture based on the nonlinearities of the EH process. Furthermore, they propose a WPT-optimal DCSK-based waveform that results in enhanced energy transfer. The harvested energy is analytically characterized in terms of the transmitted waveform parameters, by considering an ideal flat fading scenario. However, it is worth noting that DCSK-based signals are essentially wideband signals \cite{2ray}.

Motivated by this, in this paper, we investigate a point-to-point DCSK-based WPT topology by considering a frequency selective fading scenario. Closed-form analytical expressions of the harvested DC are obtained in terms of the transmitted waveform parameters and the power delay profile of a generalized Nakagami-$m$ multipath channel. By considering practical channel conditions, an approximation of the harvested energy is also provided, when the delay spread is negligible compared to the transmitted symbol duration. Our results demonstrate that while a frequency selective channel results in enhanced WPT when compared against its flat fading counterpart, a significant delay spread is detrimental for energy transfer.

\section{A Chaotic Signal-based WPT System Architecture} \label{SM}

\subsection{System model}
\begin{figure}[!t]
\centering\includegraphics[width=\linewidth]{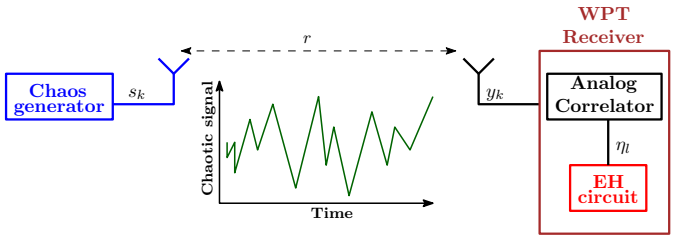}
\vspace{-1mm}
\caption{Architecture for DCSK-based WPT.}
\label{fig:model}
\vspace{-1mm}
\end{figure}

We consider a simple point-to-point WPT topology, where the transmitter employs a DCSK generator and the receiver consists of an analog correlator, followed by an EH circuit \cite{jstsp}, as shown in Fig. \ref{fig:model}. Note that, DCSK signals, until now, have been mainly considered for information transfer. However, here we focus on the WPT aspect and ignore the information side of the signal. We assume that the wireless link suffers from both large-scale path-loss effects and small-scale frequency-selective fading. Specifically, the received power is proportional to $r^{-a}$, where $r$ is the transmitter-receiver distance and $a>0$ denotes the pathloss exponent. Moreover, unlike \cite{jstsp}, a generalized  frequency selective fading channel is considered here. In wideband wireless communication systems, a commonly used channel model is the \textit{two-ray fading channel model} \cite{2ray}. Using this generalized channel model, the output of the channel is
\vspace{-1mm}
\begin{equation}  \label{chopt}
y_k = \alpha_1 s_k+\alpha_2s_{k-\tau},
\end{equation}
where $s_k$ is the $k$-th transmitted chaotic chip, $\tau$ is the time delay between two rays, and $\alpha_1,\alpha_2$ are independent Nakagami-$m$ distributed random variables. Generally, the two paths will have unequal average power gains and hence, we assume $\alpha_1,\alpha_2$ to have power gains $\Omega_1,\Omega_2$, such that $\Omega_1+\Omega_2=1$ \cite{2ray}.
In general, a frequency-selective channel adopts an $n$-ray $(n \geq 2)$ fading model. However, in this work, we consider a two-ray model for analytical simplicity. Furthermore, the aim of this work is to demonstrate how the chaotic signal-aided energy transfer is affected by the frequency-selective nature of the channel. Hence, the analysis can be extended to any arbitrary number of multipaths.

\subsection{Chaotic signals}

Assume a DCSK signal, where the current symbol is dependent on the previous symbol \cite{ch2} and different sets of chaotic sequences can be generated by using different initial conditions. Each transmitted bit is represented by two sets of chaotic signal samples, with the first set representing the reference, and the other conveying information. If $+1$ is to be transmitted, the data sample will be identical to the reference sample. Otherwise, an inverted version of the reference sample will be used as the data sample \cite{ch2}. Let $\beta$ be a non-negative integer, defined as the \textit{spreading factor}. Then, $2\beta$ chaotic samples are used to spread each information bit. During the $l$-th transmission interval, the output of the transmitter is
\begin{align}  \label{sym}
s_{l,k}=\begin{cases} 
x_{l,k}, & k=2(l-1)\beta+1,\dots,(2l-1)\beta,\\
d_lx_{l,k-\beta}, & k=(2l-1)\beta+1,\dots,2l\beta,
\end{cases}&
\end{align}
where $d_l=\pm1$ is the information bit, $x_{l,k}$ is the chaotic sequence used as the reference signal, and $x_{l,k-\beta}$ is its delayed version. Furthermore, $x_{l,k}$ can be generated according to various existing chaotic maps. Due to its good correlation properties, we consider the Chebyshev map of degree $\xi$ for chaotic signal generation, which is defined as \cite{ch2}
\begin{align}
x_{k+1}=\cos(\xi\cos^{-1}(x_k)), \forall \:\:|x_k| \leq 1.
\end{align}

\subsection{Analog correlator}   \label{AC}

The proposed WPT architecture employs an analog correlator, followed by an EH rectifier circuit. An analog correlator essentially consists of a series of $(\psi-1)$ delay blocks, where $\psi$ is a positive integer; the rationale behind this application is that, the signal can be effectively integrated over a certain time interval \cite{anaco2}. In what follows, for the sake of simplicity, we will consider $\psi$ equal to the transmitted symbol length, i.e. $\psi=2\beta$. As a result, the correlator output $\eta_l$ for the $l$-th received symbol is
\begin{align}    \label{corr}
\eta_l&=\sqrt{P_t}\sum\limits_{k=2(l-1)\beta+1}^{2l\beta} y_{l,k} \nonumber \\
&=\sqrt{P_t}\sum\limits_{k=2(l-1)\beta+1}^{2l\beta} \alpha_1 s_k+\alpha_2s_{k-\tau},
\end{align}
which follows from \eqref{chopt} and $P_t$ is the transmission power. It is worthy to note that $\psi=1$ corresponds to the conventional case without a correlator\footnote{An analog correlator is placed at the receiver, prior to the harvester, because the work in \cite{jstsp} analytically proves and demonstrates why the correlator is beneficial for energy transfer \cite[Proposition $1$]{jstsp}.}.

\subsection{Energy transfer model} \label{ehsec}

The WPT receiver is equipped with an antenna followed by a rectifier. The rectifier, which generally consists of a diode  (e.g., a Schottky diode) and a passive low pass filter, acts as an envelope detector \cite{envd} and therefore neglects the phase of the received signal $y$. Based on the nonlinearity of this circuit, the harvested power in terms of $y$ is \cite{wdesg}
\begin{equation} \label{brunoeh}
P_{\rm harv}=k_2R_{ant}\mathbb{E} \{ |y|^2 \}+k_4R_{ant}^2\mathbb{E} \{ |y|^4 \},
\end{equation}
where the parameters $k_2,k_4,$ and $R_{ant}$ are constants determined by the characteristics of the circuit. The conventional linear model is a special case of this nonlinear model and can be obtained by considering only the first term in \eqref{brunoeh}. Moreover, the energy transfer model considered in this work is based on the assumption that the harvester operates in the nonlinear region \cite{wdesg}. If the power of the harvester input signal becomes too large, the diode inside the harvester will be forced into the saturation region of operation, making the derived analytical results inapplicable. For the sake of presentation, we will use $\varepsilon_{1} = r^{-a}k_2R_{ant}P_t$ and $\varepsilon_{2}=r^{-2a}k_4R_{ant}^2P_t^2$. Hence, by incorporating a $\psi$-bit analog correlator at the receiver prior to the harvester, the average harvested power is \cite{jstsp}
\begin{align} \label{zdef1}
P_{{\rm harv}}&=\varepsilon_1\mathbb{E}\left\lbrace \left( \sum\limits_{k=1}^{\psi}y_k \right) ^2 \right\rbrace  + \varepsilon_2\mathbb{E}\left\lbrace \left( \sum\limits_{k=1}^{\psi}y_k \right) ^4 \right\rbrace.
\end{align}

\section{Analysis of harvested energy}

In this section, we analyse the harvested energy in terms of the transmitted chaotic waveform parameters and also the frequency-selective channel parameters.

The authors in \cite{jstsp}, proposed a WPT-optimal signal design of length $\beta+1$, where during the $l$-th symbol duration, the transmitter output is
\begin{align}  \label{optsym}
&s_k \nonumber \\
&=\!\begin{cases} 
\!x_k, & \!\!k=(\beta+1)(l-1)+1,\\
\!d_lx_{(\beta+1)(l-1)+1}, & \!\!k=(\beta+1)(l-1)+2,\cdots,(\beta+1)l.
\end{cases}
\end{align}
If this waveform is transmitted by considering a Nakagami-$m$ block fading scenario and a $\beta+1$ bit analog correlator is implemented at the receiver prior to the harvester, the obtained harvested power is \cite{jstsp}
\begin{equation}  \label{ff}
P_{\rm harv}=\frac{1}{2}\varepsilon_1(1+\beta^2)+\frac{3(1+m)}{8m}\varepsilon_2(1+6\beta^2+\beta^4).
\end{equation}
However, this is an ideal scenario and in practice, chaotic signals are wideband signals \cite{2ray}. As a result, to obtain a realistic insight into the harvested power of this WPT-optimal waveform, we consider here a frequency selective scenario as defined in \eqref{chopt}, where we have the delayed copy of the $l$-th transmitted symbol, i.e. $s_{k-\tau}$ as
\begin{equation}    \label{delayed}
s_{k-\tau}=
\begin{cases}
  d_{l-1}x_{(\beta+1)(l-2)+1}, & 
       \begin{aligned}[t]
       k&=(\beta+1)(l-1)+1,\\
       &\quad \cdots,(\beta+1)(l-1)+\tau,
       \end{aligned}
\\
  x_{(\beta+1)(l-1)+1}, & k=(\beta+1)(l-1)+\tau+1,
\\
d_lx_{(\beta+1)(l-1)+1}, &
       \begin{aligned}[t]
       k&=(\beta+1)(l-1)+\tau+2,\\
       &\quad \cdots,(\beta+1)l.
       \end{aligned}
\end{cases}
\end{equation}
Hence, by using \eqref{chopt}, \eqref{corr}, \eqref{optsym}, and  \eqref{delayed} in \eqref{zdef1}, we obtain
\begin{align} \label{zdef}
P_{\rm harv}&=\varepsilon_1\mathbb{E}\left\lbrace \left(  \sum\limits_{k=(\beta+1)(l-1)+1}^{(1+\beta)l}y_k \right) ^2 \right\rbrace \nonumber \\
& \quad  + \varepsilon_2\mathbb{E}\left\lbrace \left(\sum\limits_{k=(\beta+1)(l-1)+1}^{(1+\beta)l}y_k \right) ^4 \right\rbrace,
\end{align}
where the expectation is taken over $\alpha_1,\alpha_2,$ and $s_k$. Towards this direction, we provide the following theorem.

\begin{theorem}    \label{theo1}
The harvested power for the WPT-optimal signal in a frequency selective Nakagami-$m$ fading scenario is given by \eqref{theq1}.
\begin{figure*}
\begin{align}    \label{theq1}
P_{\rm harv}&=\frac{\varepsilon_1}{2} \left( \tau^2\Omega_2 + \Omega_1 \left(1+\beta^2 \right)+\Omega_2 \left(1+(\beta-\tau)^2 \right) + 2 \left(1+\beta^2-\beta\tau\right) \left(\frac{\Gamma(m+0.5)}{\Gamma(m)}\right)^2 \frac{\sqrt{\Omega_1\Omega_2}}{m} \right) \nonumber \\
& \quad  + \frac{\varepsilon_2}{2} \left( \frac{3}{4} \left( \tau^4\Omega_2^2 \left( \frac{m+1}{m} \right)+ \Omega_1^2 \left( 1+6\beta^2+\beta^4 \right) \left( \frac{m+1}{m} \right) +4\frac{\Gamma (1.5+m) \Gamma (0.5+m)}{m^2\Gamma^2(m)}\Omega_1^{1.5}\Omega_2^{0.5} \right. \right. \nonumber \\
& \quad\left.\left. \times \left( \left(1+3\beta^2\right)+\left(\beta-\tau\right)\left(3\beta+\beta^3 \right) \right)+ 6\Omega_1\Omega_2 \left( \left(\beta-\tau\right)^2 \left(1+\beta^2\right) +4\beta\left(\beta-\tau\right) +1+\beta^2 \right) \right. \right. \nonumber \\
& \quad \left. \left. + 4\frac{\Gamma
(1.5+m) \Gamma (0.5+m)}{m^2\Gamma(m)^2}\Omega_1^{0.5}\Omega_2^{1.5} \left( 1+3 \left(\beta-\tau\right)^2+\beta\left(\beta-\tau\right)^3 +3\beta\left(\beta-\tau\right) \right)+\Omega_2^2 \left( 1+6(\beta-\tau)^2 \right. \right.\right. \nonumber \\
& \quad \left.\left. \left. +(\beta-\tau)^4 \right) \left( \frac{m+1}{m}\right)  \right) + 3\tau^2\Omega_2 \left( \Omega_1 \left(1+\beta^2 \right)+\Omega_2 \left(1+(\beta-\tau)^2 \right)+2\left(1+\beta^2-\beta\tau\right) \left(\frac{\Gamma(m+0.5)}{\Gamma(m)}\right)^2 \right. \right. \nonumber \\
& \quad \left. \left. \frac{\sqrt{\Omega_1\Omega_2}}{m} \right) \right).
\end{align}
\hrulefill
\end{figure*}
\end{theorem}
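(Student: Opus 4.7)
The plan is to compute $\mathbb{E}\{Z^2\}$ and $\mathbb{E}\{Z^4\}$ directly, where $Z\triangleq\sum_{k=(\beta+1)(l-1)+1}^{(\beta+1)l} y_k = \alpha_1 A + \alpha_2 B$. From \eqref{optsym} the direct path contributes a single reference sample followed by $\beta$ copies weighted by the information bit, so I would first show
\begin{equation*}
A=\sum_{k}s_k = x_l(1+\beta d_l),
\end{equation*}
and from \eqref{delayed} the delayed replica splits into a contribution from the previous symbol during the first $\tau$ chips and from the current symbol during the remaining $\beta+1-\tau$ chips, yielding
\begin{equation*}
B=\sum_{k}s_{k-\tau} = \tau d_{l-1}x_{l-1}+x_l\bigl(1+(\beta-\tau)d_l\bigr).
\end{equation*}
These simple closed forms are the workhorse of the whole derivation.

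Next I would substitute $Z=\alpha_1 A+\alpha_2 B$ into \eqref{zdef} and expand $Z^2$ and $Z^4$ by the binomial theorem. Because $\alpha_1,\alpha_2$ are independent of the transmitted symbols and of the bits, the expectation factorises into a product of a fading moment times a signal moment for each of the terms $\alpha_1^i\alpha_2^{4-i}A^iB^{4-i}$, $i=0,\ldots,4$ (and similarly for the quadratic expansion). I would then tabulate the required Nakagami-$m$ moments
\begin{equation*}
\mathbb{E}\{\alpha^2\}=\Omega,\;\mathbb{E}\{\alpha^4\}=\tfrac{m+1}{m}\Omega^2,\;\mathbb{E}\{\alpha\}=\tfrac{\Gamma(m+0.5)}{\Gamma(m)}\sqrt{\tfrac{\Omega}{m}},\;\mathbb{E}\{\alpha^3\}=\tfrac{\Gamma(m+1.5)}{\Gamma(m)}\bigl(\tfrac{\Omega}{m}\bigr)^{3/2},
\end{equation*}
whose pairings against the signal moments generate precisely the $\Gamma(\cdot)$ factors visible in \eqref{theq1}.

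The signal-side moments reduce to moments of the Chebyshev map (with invariant density $1/(\pi\sqrt{1-x^2})$) and of the bits $d_l\in\{\pm 1\}$. Using $\mathbb{E}\{x\}=\mathbb{E}\{x^3\}=0$, $\mathbb{E}\{x^2\}=1/2$, $\mathbb{E}\{x^4\}=3/8$, together with independence of $x_l,x_{l-1},d_l,d_{l-1}$ and $\mathbb{E}\{d_l\}=0$, $\mathbb{E}\{d_l^2\}=1$, I would compute $\mathbb{E}\{A^i B^{4-i}\}$ for each $i$. For example $\mathbb{E}\{A^2\}=\tfrac12(1+\beta^2)$ and $\mathbb{E}\{B^2\}=\tfrac12\bigl(\tau^2+1+(\beta-\tau)^2\bigr)$ recover the terms on the first line of \eqref{theq1}; the cross moment $\mathbb{E}\{AB\}$ vanishes against $\mathbb{E}\{d_{l-1}\}$ but survives in the part $x_l(1+\beta d_l)\cdot x_l(1+(\beta-\tau)d_l)$, producing $\tfrac12(1+\beta^2-\beta\tau)$ after averaging over $d_l$. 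The same mechanism, applied to the fifteen terms of $\mathbb{E}\{Z^4\}$, yields the quartic block in \eqref{theq1}: the $d_{l-1}$-odd terms disappear, the $d_{l-1}^2$ terms carry the $\tau^2\Omega_2$ and $\tau^4\Omega_2^2$ prefactors, and the remainder depends only on $x_l,d_l$.

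The principal obstacle is simply the bookkeeping of the quartic expansion: one must carefully track which cross terms vanish by parity (in $d_{l-1}$, $x_{l-1}$, or $x_l$) and collect the surviving terms according to the power of $\alpha_2$ that multiplies them, so that the $\Omega_2^{1/2}$ and $\Omega_2^{3/2}$ contributions aggregate correctly with the $\Gamma(1.5+m)\Gamma(0.5+m)/(m^2\Gamma^2(m))$ prefactors. Once every $\mathbb{E}\{A^iB^{4-i}\}$ is evaluated and multiplied by the matching Nakagami moment, grouping the terms by the fading prefactor $\Omega_1^{p}\Omega_2^{q}$ reproduces \eqref{theq1} line by line.
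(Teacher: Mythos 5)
Your proposal is correct and follows essentially the same route as the paper's Appendix A: it reduces the correlator output to the same closed form $\tau d_{l-1}\alpha_2 x_{l-1}+\bigl(\alpha_1(1+\beta d_l)+\alpha_2(1+(\beta-\tau)d_l)\bigr)x_l$, expands the square and fourth power, and factorises the expectation using the identical Nakagami-$m$ moments and Chebyshev-map moments ($\mathbb{E}\{x\}=\mathbb{E}\{x^3\}=0$, $\mathbb{E}\{x^2\}=1/2$, $\mathbb{E}\{x^4\}=3/8$). The only cosmetic difference is that you group the expansion by the fading coefficients ($Z=\alpha_1 A+\alpha_2 B$) whereas the paper groups by the two independent chaotic samples ($Z=\delta_1 x_{l-1}+\delta_2 x_l$); the resulting bookkeeping is equivalent.
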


\begin{proof}
See Appendix \ref{app1}. 
\end{proof}

We observe from \eqref{theq1} how the power distribution over the two paths, i.e. $\Omega_1,\Omega_2$ and the delay spread $\tau$ affect the WPT performance. The theorem demonstrates that a considerable $\tau$ is detrimental for energy transfer. Note that the flat fading channel is a special case of its frequency selective counterpart. By  replacing $\Omega_1=1,\Omega_2=0,$ and considering delay spread $\tau=0$ in \eqref{theq1}, $P_{\rm harv}$ reduces to \eqref{ff}. Furthermore, in a frequency selective scenario, for an $N$-tone multisine signal, the linear term of the harvested power is independent of $N$ and the nonlinear term is linearly dependent on $N$ \cite{wdesg}. On the contrary, with a DCSK-based waveform, the linear and nonlinear terms of $P_{\rm harv}$ in \eqref{theq1} are proportional to $\beta^2$ and $\beta^4$, respectively. As a result, this waveform significantly outperforms multisine waveforms, in terms of WPT.

To the best of our knowledge, this is the first analytical closed-form expression obtained for DCSK-based WPT, which  characterizes the harvested energy in terms of the power delay profile of a frequency selective wireless channel and also the nonlinearities of the EH process at the harvester. Finally, note that \eqref{theq1} is an extremely complex and lengthy equation. However, the work in \cite{2ray} demonstrates that practically we have $0<\tau \ll \beta$. As a result, an approximation for $P_{\rm harv}$ can be obtained corresponding to the $\tau \rightarrow 0$ scenario as an upper bound, given in the following corollary.

\begin{corr}
By considering the limiting case of negligible delay spread, the harvested power is given by
\begin{align}
\lim\limits_{\tau \rightarrow 0} P_{\rm harv}\!&=\!\frac{\varepsilon_1}{2}\left(1+\beta^2 \right)\!\left( \Omega_1+\Omega_2+2\left(\frac{\Gamma(m+0.5)}{\Gamma(m)}\right)^2 \right. \nonumber \\
&\quad \left. \times \frac{\sqrt{\Omega_1\Omega_2}}{m} \right)+\frac{3\varepsilon_2}{8} \left( 1+6\beta^2+\beta^4 \right) \nonumber \\
& \quad\times \left( \left( \Omega_1^2+\Omega_2^2 \right) \left( \frac{m+1}{m} \right)+6\Omega_1\Omega_2 \right. \nonumber \\
& \quad \left. 4\frac{\Gamma
(1.5+m) \Gamma (0.5+m)}{m^2 \Gamma(m)^2}\!\! \left( \Omega_1^{0.5}\Omega_2^{1.5}+\Omega_1^{1.5}\Omega_2^{0.5} \right)\!\!\right)\!\!.
\end{align}
\end{corr}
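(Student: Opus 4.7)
The plan is to prove the corollary by a direct algebraic substitution and simplification of \eqref{theq1} under the limit $\tau \to 0$, since Theorem \ref{theo1} already provides the general closed-form expression for $P_{\rm harv}$. Because $\tau$ appears in \eqref{theq1} only through the explicit monomial factors $\tau^2\Omega_2$, $\tau^4\Omega_2^2$, $3\tau^2\Omega_2(\cdots)$, and through the argument $(\beta-\tau)$ inside several polynomials in $\beta$, the limit reduces to a routine cancellation and collection of terms.

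First I would zero out all terms carrying an explicit $\tau^2$ or $\tau^4$ prefactor. This immediately eliminates the $\tau^2\Omega_2$ contribution in the $\varepsilon_1$ block, and the $\tau^4\Omega_2^2(m+1)/m$ term as well as the entire $3\tau^2\Omega_2(\cdots)$ block in the $\varepsilon_2$ block. Next I would substitute $\beta-\tau \mapsto \beta$ everywhere it remains. In the $\varepsilon_1$ block, the surviving pieces become $\Omega_1(1+\beta^2)+\Omega_2(1+\beta^2)+2(1+\beta^2)[\Gamma(m+0.5)/\Gamma(m)]^2\sqrt{\Omega_1\Omega_2}/m$, which factorizes cleanly as $(1+\beta^2)$ times the bracket displayed in the corollary.

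For the $\varepsilon_2$ block the work is the same in spirit but requires verifying four polynomial identities in $\beta$, each of which must collapse to the common factor $1+6\beta^2+\beta^4$. Specifically I would check
\begin{align*}
(1+3\beta^2)+\beta(3\beta+\beta^3) &= 1+6\beta^2+\beta^4, \\
\beta^2(1+\beta^2)+4\beta^2+(1+\beta^2) &= 1+6\beta^2+\beta^4, \\
1+3\beta^2+\beta\cdot\beta^3+3\beta\cdot\beta &= 1+6\beta^2+\beta^4,
\end{align*}
corresponding to the $\Omega_1^{1.5}\Omega_2^{0.5}$, $\Omega_1\Omega_2$, and $\Omega_1^{0.5}\Omega_2^{1.5}$ coefficients respectively; the $\Omega_1^2$ and $\Omega_2^2$ coefficients already contain $1+6\beta^2+\beta^4$ explicitly. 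After pulling out this common factor, together with the overall $\tfrac{3}{4}\cdot\tfrac{\varepsilon_2}{2}=\tfrac{3\varepsilon_2}{8}$, the bracket matches the one stated in the corollary.

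The main obstacle is purely bookkeeping: the density of nested coefficients in \eqref{theq1} makes it easy to drop a term during the $\tau\to 0$ substitution, so I would organize the verification in a table that lists every summand, its $\Omega$-exponents, and the residual polynomial in $\beta$ after setting $\tau=0$, and only then factor and compare with the target expression. No convergence or interchange-of-limit issues arise because \eqref{theq1} is a polynomial in $\tau$, so taking the limit is trivial and the corollary follows without further analytic input.
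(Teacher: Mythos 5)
Your proposal is correct and follows essentially the same route as the paper, which simply states that the corollary follows from Theorem \ref{theo1} by setting $\tau \to 0$ in \eqref{theq1}. Your explicit verification of the polynomial identities collapsing each $\Omega$-coefficient to the common factor $1+6\beta^2+\beta^4$ is accurate and merely makes the paper's one-line substitution argument more rigorous.
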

The above corollary follows directly from Theorem \ref{theo1} by replacing $\tau \rightarrow 0$ in \eqref{theq1}. Note that $m=1$ leads to an enhanced harvested energy compared to the $m \rightarrow \infty$ scenario, i.e. wireless fading enhances WPT. This observation corroborates the claims made in \cite{jstsp} regarding the beneficial role of fading in DCSK-based WPT systems.

\section{Numerical Results}

We validate our theoretical analysis through extensive Monte-Carlo simulations. Unless otherwise stated, lines correspond to analysis whereas markers correspond to simulation results. Without any loss of generality, we consider a transmission power of $P_t=30$ dBm, a Tx-Rx distance $r=20$ m, and a pathloss exponent $\alpha=4$. The parameters considered for the EH model are taken as $k_2=0.0034,k_4=0.3829,$ and $R_{ant}=50$ $\Omega$ \cite{hparam}.

\begin{figure}[!t]
 \centering\includegraphics[width=0.96\linewidth]{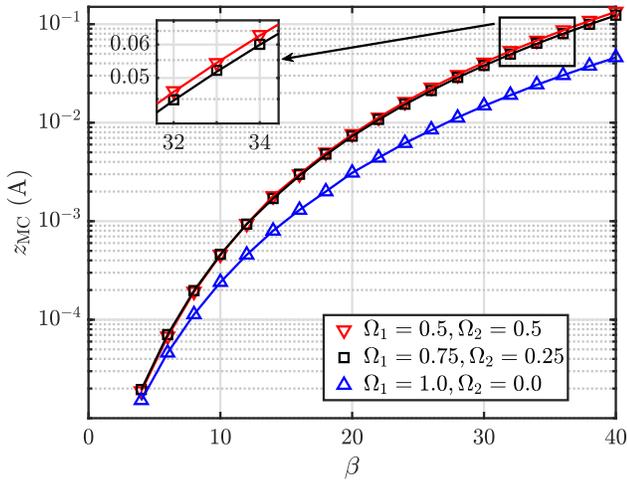}
 \vspace{-2mm}
\caption{WPT performance over a two-ray Nakagami-$m$ fading channel ($z_{\rm MC}$ versus $\beta$) with $m=4,\tau=3$.}
\label{fig:fig1}
\vspace{-1mm}
\end{figure}

\begin{figure}[!t]
\centering\includegraphics[width=0.96\linewidth]{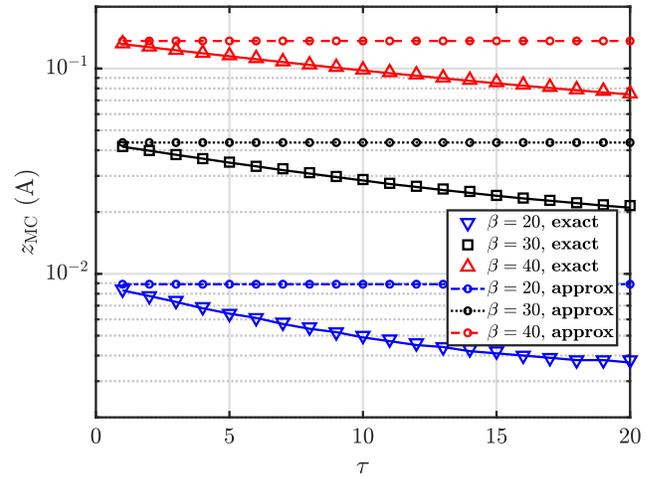}
 \vspace{-2mm}
\caption{WPT performance over a two-ray Nakagami-$m$ fading channel ($z_{\rm MC}$ versus $\tau$) with $m=4,\Omega_1=0.75,\Omega_2=0.25$.}
\label{fig:fig2}
\vspace{-1mm}
\end{figure}


Fig. \ref{fig:fig1} depicts the WPT performance of the proposed DCSK-based energy waveform with respect to the spreading factor, over a two ray Nakagami-$m$ frequency selective channel. Firstly, we observe the theoretical results (lines) match very closely with the simulation
results (markers); this verifies our proposed analytical framework. Moreover, the figure also demonstrates that the frequency selective nature of the channel, i.e. $\Omega_2\neq0$ is beneficial for WPT and the worst performance is obtained as the channel moves towards the limiting case of a flat fading scenario, i.e. $\Omega_1=1,\Omega_2=0.$ While \cite{jstsp} proved that fading is beneficial for DCSK-based WPT, we extend the claim in this work by demonstrating that frequency selective fading is more beneficial for DCSK-based WPT, when compared against its flat fading counterpart.

Fig. \ref{fig:fig2} illustrates the impact of the delay spread $\tau$ on the WPT performance of the proposed energy waveform for various values of the spreading factor $\beta$. The figure demonstrates that the harvested energy decreases with $\tau$, which is inline with \eqref{theq1} in Theorem \ref{theo1}. Moreover, for a particular $\tau$, greater $\beta$ results in higher harvested DC, which corroborates the claim made in Fig. \ref{fig:fig1}. Finally, we observe that the effect of $\tau$ can be neglected when $\tau \ll \beta$. However, for a considerable $\tau$, the approximation does not agree with the exact harvested energy. Fortunately, in most practical applications, the condition $\tau \ll \beta$ holds \cite{2ray} and as a result, the degradation in WPT performance due to delay spread of the channel is negligible. Finally, we observe that the gap between the actual and the approximated harvested energy decreases with increasing $\beta$; for example, observe the performance gap at $\tau=10$ between $\beta=20$, $\beta=30$, and $\beta=40$.

Fig. \ref{fig:fig3} demonstrates the combined effect of the frequency-selectivity and delay spread of the channel. In this figure we obtain the harvested DC by jointly varying the power ratio $\frac{\Omega_2}{\Omega_1}$ and delay spread $\tau$. It is worthy to note that in this figure, $\frac{\Omega_2}{\Omega_1} \in \left( 0,1 \right)$; while $\frac{\Omega_2}{\Omega_1}=0$ denotes a flat fading channel, $\frac{\Omega_2}{\Omega_1}=1$ implies equal power distribution between the two paths of the considered two-ray path model. We do not consider $\Omega_2>\Omega_1$ because for all practical scenarios, we always have $\Omega_2 \leq \Omega_1$ \cite{2ray}. An interesting observation obtained from Fig. \ref{fig:fig1} and Fig. \ref{fig:fig2} is that while frequency selectivity of the channel enhances WPT performance, increasing delay spread $\tau$ has a negative impact on the same. As $\tau \neq 0$ is the signature of a frequency selective fading scenario, the best performance is observed at $\frac{\Omega_2}{\Omega_1}=1$, i.e. $\Omega_1=\Omega_2=0.5$ and $\tau=1$.

\begin{figure}[!t]
\centering\includegraphics[width=0.96\linewidth]{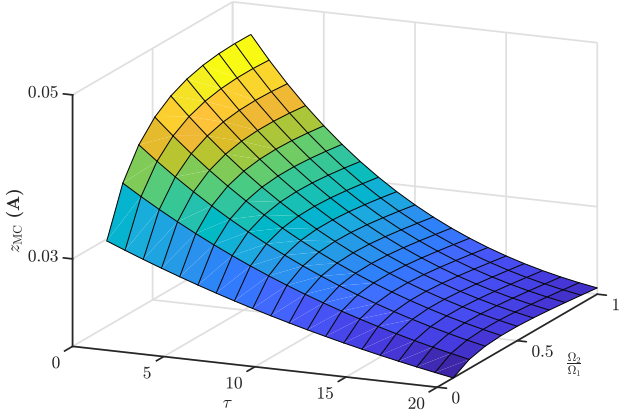}
 \vspace{-2mm}
\caption{Joint effect of the multipath components and the delay spread on the harvested energy with $m=4,\beta=30$.}
\label{fig:fig3}
\vspace{-1mm}
\end{figure}

\section{Conclusion}

In this paper, we investigated the framework of DCSK-based WPT by considering a generalized frequency selective Nakagami-$m$ fading scenario and also taking into account the nonlinearities of the EH process. To the best of our knowledge, this is the first work where a closed-form expression of harvested energy in a frequency-selective fading scenario is obtained as a function of the multipath components and the delay spread. An approximation of the harvested DC is also provided for the scenario, where the delay spread is negligible compared to the transmitted symbol duration. It was shown that while a frequency selective channel enhances WPT compared to its flat fading counterpart, an increasing delay spread is detrimental to wireless energy transfer. While the analytical expressions are derived by considering a simplified two-ray fading channel model, the analysis can be generalized for any arbitrary number of multipaths.

\appendices

\section{Proof of Theorem \ref{theo1}}  \label{app1}

From \eqref{chopt}, we have the channel output
\begin{equation}
y_k = \alpha_1 s_k+\alpha_2s_{k-\tau},
\end{equation}
where $\alpha_1,\alpha_2$ are independent Nakagami-$m$ distributed random variables with power gains $\Omega_1,\Omega_2$, respectively such that $\Omega_1+\Omega_2=1$. Therefore, the probability density function (PDF) of $\alpha_i,$ $\forall$ $i \in \{1,2\}$ is
\begin{equation}    \label{nakadef}
f_{\alpha_i}(z)=\frac{2m^mz^{2m-1}e^{-\frac{mz^2}{\Omega_i}}}{\Gamma(m)\Omega_i^m}, \:\: \forall \:\:z \geq 0,
\end{equation}
where $\Gamma(\cdot)$ denotes the complete Gamma function, $m \geq 1$ controls the severity of the amplitude fading, and $\mathbb{E}\{\alpha_i^2\}=\Omega_i$. Moreover, the chaotic sequences generated from the considered Chebyshev map have the following invariant PDF \cite{ch2}
\begin{align}  \label{spdf}
f_X(x)=\begin{cases} 
\frac{1}{\pi\sqrt{1-x^2}}, & |x|< 1,\\
0, & \text{otherwise}.
\end{cases}&
\end{align}
Here, we aim to obtain a closed-form expression for
\begin{align}    \label{happeq}
P_{\rm harv}&=\varepsilon_1\mathbb{E}\left\lbrace \left(  \sum\limits_{k=(\beta+1)(l-1)+1}^{(1+\beta)l}y_k \right) ^2 \right\rbrace \nonumber \\
& \quad  + \varepsilon_2\mathbb{E}\left\lbrace \left(\sum\limits_{k=(\beta+1)(l-1)+1}^{(1+\beta)l}y_k \right) ^4 \right\rbrace,
\end{align}
where the expectation is taken over $\alpha_1,\alpha_2,$ and $s_k$. In this context, from \eqref{optsym} and \eqref{delayed}, we obtain
\begin{align}
&\sum\limits_{k=(\beta+1)(l-1)+1}^{(1+\beta)l}y_k =\sum\limits_{k=(\beta+1)(l-1)+1}^{(1+\beta)l} \alpha_1 s_k+\alpha_2s_{k-\tau} \nonumber \\
&=\tau \alpha_2 d_{l-1} x_{(\beta+1)(l-2)+1} \nonumber \\
& \quad+ \left( \alpha_1 \left(1+\beta d_l \right)+\alpha_2 \left(1+(\beta-\tau) d_l \right) \right) x_{(\beta+1)(l-1)+1} \nonumber \\
&=\delta_1x_{(\beta+1)(l-2)+1}+\delta_2x_{(\beta+1)(l-1)+1},
\end{align}
where $\delta_1=\tau \alpha_2 d_{l-1}$ and $\delta_2=\alpha_1 (1+\beta d_l)+\alpha_2(1+(\beta-\tau) d_l)$. Accordingly, \eqref{happeq} is rewritten as
\begin{align}  \label{redef1}
P_{\rm harv}&=\varepsilon_1\mathbb{E}\left\lbrace \left( \delta_1x_{(\beta+1)(l-2)+1}+\delta_2x_{(\beta+1)(l-1)+1} \right) ^2 \right\rbrace \nonumber \\
& \quad+\varepsilon_2\mathbb{E}\left\lbrace \left( \delta_1x_{(\beta+1)(l-2)+1}+\delta_2x_{(\beta+1)(l-1)+1} \right) ^4 \right\rbrace.
\end{align}
Then, the first term of \eqref{redef1} can be evaluated as
\begin{align}    \label{pow2}
&\mathbb{E}\left\lbrace \left( \delta_1x_{(\beta+1)(l-2)+1}+\delta_2x_{(\beta+1)(l-1)+1} \right) ^2 \right\rbrace \nonumber \\
&=\mathbb{E} \left\lbrace \delta_1^2x_{(\beta+1)(l-2)+1}^2 + \delta_2^2x_{(\beta+1)(l-1)+1}^2 \right. \nonumber \\
& \quad \left. + 2\delta_1\delta_2x_{(\beta+1)(l-2)+1}x_{(\beta+1)(l-1)+1} \right\rbrace \nonumber \\
&\overset{(a)}{=}\frac{1}{2} \left( \mathbb{E} \left\lbrace \delta_1^2 \right\rbrace+\mathbb{E} \left\lbrace \delta_2^2 \right\rbrace \right) \nonumber\\
&\overset{(b)}{=} \frac{1}{2} \left( \tau^2\Omega_2 + \Omega_1 \left(1+\beta^2 \right)+\Omega_2 \left(1+(\beta-\tau)^2 \right) \right. \nonumber \\
& \quad\left. + 2 \left(1+\beta^2-\beta\tau\right) \left(\frac{\Gamma(m+0.5)}{\Gamma(m)}\right)^2\frac{\sqrt{\Omega_1\Omega_2}}{m} \right),
\end{align}
where $(a)$ follows from \eqref{spdf} as $\mathbb{E}\{ x_i\}=0$ and $\mathbb{E}\{ x_i^2\}=\frac{1}{2}$ $\forall$ $i$. Moreover, by assuming equally likely transmissions of $d_l,d_{l-1}=\pm 1$, $(b)$ follows from
\begin{align}  \label{d1d2sq}
\mathbb{E} \left\lbrace \delta_1^2 \right\rbrace&=\mathbb{E}\left\lbrace (\tau \alpha_2 d_{l-1})^2 \right\rbrace=\tau^2\mathbb{E}\left\lbrace \alpha_2^2 \right\rbrace\mathbb{E}\left\lbrace d_{l-1}^2 \right\rbrace=\tau^2\Omega_2, \nonumber \\
\mathbb{E} \left\lbrace \delta_2^2 \right\rbrace&=\mathbb{E}  \left\lbrace \left( \alpha_1 \left(1+\beta d_l \right)+\alpha_2 \left(1+(\beta-\tau) d_l \right) \right)^2 \right\rbrace \nonumber \\
&=\mathbb{E}  \left\lbrace \alpha_1^2 \left(1+\beta d_l \right)^2+\alpha_2^2 \left(1+(\beta-\tau) d_l \right)^2 \right. \nonumber \\
& \quad \left. +2 \alpha_1\alpha_2\left(1+\beta d_l \right)\left(1+(\beta-\tau) d_l \right) \right\rbrace \nonumber \\
&=\Omega_1 \left(1+\beta^2 \right)+\Omega_2 \left(1+(\beta-\tau)^2 \right) \nonumber \\
& \quad +2\left(1+\beta^2-\beta\tau\right) \left(\frac{\Gamma(m+0.5)}{\Gamma(m)}\right)^2\frac{\sqrt{\Omega_1\Omega_2}}{m},
\end{align}
where
\begin{align}  \label{dnl1}
\mathbb{E}\left\lbrace \! \left(  1+\beta d_l \right) ^2\! \right\rbrace&=\frac{1}{2}\left\lbrace \left(  1+\beta \right)^2+\left(1-\beta\right)^2 \right\rbrace =1+\beta^2, \nonumber \\
\mathbb{E}\left\lbrace \!\left(  1+(\beta-\tau) d_l \right) ^2\! \right\rbrace&=\!\frac{1}{2}\!\left\lbrace \! \left(  1+(\beta-\tau) \right)^2\!\!+\!\!\left(1-(\beta-\tau)\right)^2 \! \right\rbrace \nonumber \\
& =1+(\beta-\tau)^2.
\end{align}
Furthermore, we also use
\begin{align}    \label{val1}
\mathbb{E}\{\alpha_i\}&=\frac{2m^m}{\Gamma(m)\Omega_i^m}\int_0^{\infty} z^{2m}e^{-\frac{mz^2}{\Omega_i}} dz \nonumber \\
&=\frac{1}{\Gamma(m)} \sqrt{\frac{\Omega_i}{m}} \int_0^{\infty} v^{m-0.5}e^{-v}dv\!=\!\frac{\Gamma(m+0.5)}{\Gamma(m)}\sqrt{\frac{\Omega_i}{m}},
\end{align}
which follows from the transformation $ \frac{mz^2}{\Omega_i} \rightarrow v$.

Similarly, by using the multinomial
theorem, the second term of \eqref{redef1} can be expanded as
\begin{align}  \label{pow4}
&\mathbb{E}\left\lbrace \left( \delta_1x_{(\beta+1)(l-2)+1}+\delta_2x_{(\beta+1)(l-1)+1} \right) ^4 \right\rbrace \nonumber \\
&\overset{(a)}{=}\mathbb{E} \left\lbrace \delta_1^4x_{(\beta+1)(l-2)+1}^4 + 6\delta_1^2\delta_2^2x_{(\beta+1)(l-2)+1}^2x_{(\beta+1)(l-1)+1}^2 \right. \nonumber \\
& \quad \left.+ \delta_2^4x_{(\beta+1)(l-1)+1}^4 \right\rbrace \nonumber \\
&\overset{(b)}{=}\frac{3}{8}\mathbb{E} \left\lbrace \delta_1^4 \right\rbrace + \frac{3}{2}\mathbb{E} \left\lbrace \delta_1^2 \right\rbrace\mathbb{E} \left\lbrace \delta_2^2 \right\rbrace + \frac{3}{8} \mathbb{E} \left\lbrace \delta_2^4 \right\rbrace,
\end{align}
where $(a)$ follows from $\mathbb{E} \left\lbrace x_i^s \right\rbrace=0 $ $\forall$ $i$ when $s$ is odd and $(b)$ follows from $\mathbb{E} \left\lbrace x_i^2 \right\rbrace=\frac{1}{2} $ $\forall$ $i$. Furthermore, $\mathbb{E}\left\lbrace \delta_1^2 \right\rbrace,\mathbb{E}\left\lbrace \delta_2^2 \right\rbrace$ are obtained from \eqref{d1d2sq} and
\begin{align}
&\mathbb{E} \left\lbrace \delta_1^4 \right\rbrace=\mathbb{E}\left\lbrace (\tau \alpha_2 d_{l-1})^4 \right\rbrace \nonumber \\
&=\tau^4\mathbb{E}\left\lbrace \alpha_2^4 \right\rbrace\mathbb{E}\left\lbrace d_{l-1}^4 \right\rbrace=\tau^4\Omega_2^2 \left( \frac{m+1}{m} \right), \nonumber \\
&\mathbb{E} \left\lbrace \delta_2^4 \right\rbrace=\mathbb{E}  \left\lbrace \left( \alpha_1 \left(1+\beta d_l \right)+\alpha_2 \left(1+(\beta-\tau) d_l \right) \right)^4 \right\rbrace \nonumber \\
&=\mathbb{E}  \left\lbrace \alpha_1^4 \left(1+\beta d_l \right)^4+4\alpha_1^3\alpha_2\left(1+\beta d_l \right)^3\left(1+(\beta-\tau) d_l \right) \right. \nonumber \\
& \quad \left. + 6\alpha_1^2\alpha_2^2\left(1+\beta d_l \right)^2\left(1+(\beta-\tau) d_l \right)^2  \right. \nonumber \\
&\quad +4\alpha_1\alpha_2^3\left(1+\beta d_l \right)\left(1+(\beta-\tau) d_l \right)^3 \nonumber \\
& \quad + \left. \alpha_2^4 \left(1+(\beta-\tau) d_l \right)^4 \right\rbrace \nonumber \\
&=\Omega_1^2 \left(\! 1+\!6\beta^2+\!\beta^4 \right)\!\! \left(\! \frac{m+1}{m}\! \right)\! +\! 4\frac{\Gamma
(1.5+m) \Gamma (0.5+m)}{m^2\Gamma^2(m)} \nonumber \\
& \quad \times \Omega_1^{1.5}\Omega_2^{0.5}\left( \left(1+3\beta^2\right)+\left(\beta-\tau\right)\left(3\beta+\beta^3 \right) \right) \nonumber \\
& \quad + 6\Omega_1\Omega_2 \left( \left(\beta-\tau\right)^2 \left(1+\beta^2\right) +4\beta\left(\beta-\tau\right)+1+\beta^2 \right) \nonumber \\
& \quad + 4\frac{\Gamma
(1.5+m) \Gamma (0.5+m)}{m^2\Gamma(m)^2}\Omega_1^{0.5}\Omega_2^{1.5} \left( 1+3 \left(\beta-\tau\right)^2 \right. \nonumber \\
& \quad \left. +\beta\left(\beta-\tau\right)^3+3\beta\left(\beta-\tau\right) \right) +\Omega_2^2 \left( 1+6(\beta-\tau)^2 \right. \nonumber \\
& \quad \left. +(\beta-\tau)^4 \right) \left( \frac{m+1}{m} \right).
\end{align}
Here, we use the $n$-th order moment of $\alpha_i$, i.e. $\displaystyle\mathbb{E}\{\alpha_i^n\}=\frac{2m^m}{\Gamma(m)\Omega_i^m}\int_0^{\infty}\!\! z^{2m+n-1}e^{-\frac{mz^2}{\Omega_i}} dz \! = \!\frac{\Gamma\left(m+\frac{n}{2}\right)}{\Gamma(m)}\left( \frac{\Omega_i}{m} \right)^{\frac{n}{2}}$, which can be obtained in a way similar to \eqref{val1}.

Finally, by substituting \eqref{pow2} and \eqref{pow4} in \eqref{redef1}, we obtain $P_{\rm harv}$ as stated in \eqref{theq1}.

\section*{Acknowledgment}

This work was co-funded by the European Regional Development Fund and the Republic of Cyprus through the Research and Innovation Foundation, under the project INFRASTRUCTURES/1216/0017 (IRIDA). It has also received funding from the European Research Council (ERC) under the European Union’s Horizon 2020 research and innovation programme (Grant agreement No. 819819).

\bibliographystyle{IEEEtran}
\bibliography{refs}

\end{document}